\newfont{\bbb}{msbm10 scaled 500}
\newfont{\bb}{msbm10 scaled 1100}
\newcommand{\Rc}{{\cal R}}
\newcommand{\Xv}{{\bf X}}
\newcommand{\Yv}{{\bf Y}}
\newcommand{\Ck}{{\bf C}}
\newcommand{\argmax}{\operatornamewithlimits{argmax}}
\newcommand{\argmin}{\operatornamewithlimits{argmin}}
\newtheorem{theorem}{Theorem}
\newtheorem{lemma}{Lemma}%[chapter]
    \author
    {Mehmet Karaca, Saeed Bastani, Basuki Endah Priyanto, Mohammadhassan Safavi  and Bj{\"o}rn Landfeldt }
    \title{Resource Management for  OFDMA based  Next Generation 802.11ax WLANs
    \thanks{This work was sponsored by the European Celtic-Plus project CONVINcE and the EC  FP7  Marie Curie IAPP Project 324515, ``MeshWise"}
    \thanks{
    Mehmet Karaca,  Saeed Bastani,   Mohammadhassan Safavi  and Bj{\"o}rn Landfeldt are with the Department  of Electrical and Information Technologies, Lund University, Lund, Sweden. Email: \{mehmet.karaca,saeed.bastani,mohammadhassan.safavi\}@eit.lth.se, \{bjorn.landfeldt\}@eit.lth.se.}
    \thanks{
    Basuki E. Priyanto is with Sony Mobile Communication AB, Lund,
    Sweden. Email: basuki.priyanto@sonymobile.com.} }
\begin{document}

\maketitle

\begin{abstract}
Recently, IEEE 802.11ax Task Group has adapted  OFDMA as a  new  technique for enabling multi-user transmission. It has   been also decided  that the scheduling duration should be same for all the  users in a multi-user OFDMA so that the transmission  of the users should end at the same time. In order to realize that condition,  the users with insufficient data should transmit null data (i.e. padding) to fill the duration. While this scheme offers strong features such as resilience to Overlapping Basic Service Set (OBSS) interference and ease of synchronization, it also poses major side issues of degraded throughput performance and waste of devices' energy. In this work, for OFDMA based  802.11 WLANs we first propose practical algorithm in which the scheduling duration is fixed and does not change from time to time. In the second algorithm the scheduling duration is dynamically determined in a resource  allocation framework by taking into account the padding overhead, airtime fairness and energy consumption of the users. We  analytically investigate our resource allocation problems  through  Lyapunov optimization techniques and  show that our algorithms are arbitrarily close to the optimal performance at the price of reduced convergence rate.  We also calculate  the overhead of our algorithms in a realistic set-up and propose solutions for the implementation issues.
\end{abstract}

\section{Introduction}
The increasing demand for high throughput wireless access is driven by the proliferation of mobile devices, an increasing demand for data-hungry services, and the growing trend of dense network scenarios. This has led to an unprecedented growth in the wireless local area network (WLAN) market, which has spurred a new wave of standardization activities, leading to the recently developed multi-gigabit IEEE 802.11ac  \cite{ieeeac} followed by IEEE 802.11ax ( (High Efficiency WLAN (HEW)) \cite{ieeeax} (see TGax Specification Framework) effort, with an ambitious target of achieving at least a four times increase of medium access control (MAC) throughput per station compared to 802.11ac. This target will be far-reached unless radical improvements are made in both physical layer  as well as medium access control functionality. The mature experience of using 802.11 based WLANs indicates that a simple CSMA/CA mechanism is inefficient especially when the network density or the traffic volume increase \cite{bianchi2000performance}, despite the fact that both cases will be inevitable properties of future WLAN deployment scenarios.     

Orthogonal frequency division multiple access (OFDMA), as a multi-user (MU) transmission technique, has been approved as new technique for 802.11ax to improve the performance of dense 802.11 networks by offering multi-user diversity and a high spectral efficiency, thus providing substantially enhanced throughput and paving the way for the realization of multi-gigabit WLANs \cite{ieeeax}. 
%For instance, MU-MIMO is a new technique which came with 802.11ac, and Orthogonal frequency division multiple access (OFDMA) has been approved as new technique for 802.11ax \cite{ieeeax}. It is well-known that those techniques are already in user for other systems such as LTE.  
However, attaining the full advantages of OFDMA, especially in dense network scenarios requires that certain features are taken into consideration including low synchronization complexity and high resilience to interference from OBSS. These features are indeed addressed in the recent specification of IEEE 802.11ax standard \cite{IEEEOFDMAFW} by proposing that a scheduling duration (e.g.,  Physical Layer Convergence Protocol (PLCP) Protocol Data Unit (PPDU) or  Transmit Opportunity (TXOP) duration) should be announced to the users so that  each user in a MU-OFDMA should end their transmission at the same time. However,  the determination of scheduling duration has not been specified yet. Also,  when the users do not have sufficient data to transmit 802.11ax mandates that the scheduled users to transmit null bits (i.e. padding) to fill this duration, which degrades throughput performance and waste of devices' energy caused by the padding bit transmissions. We note that padding overhead also occurs in other MU transmission techniques such as MU-MIMO in 802.11ac \cite{Gond11ac}, where MAC padding bits should be introduced as well.

% Also, if necessary, padding bits should be used in LTE systems\cite{LTEpadding}.
%, and the padding overhead depends on the size of Athe allocated resource . 

In this paper, we first propose two algorithms: in the first algorithm, the scheduling duration is fixed at every transmission time. On the other hand, in the second algorithm we allow that the scheduling duration can change over time depending on queue sizes and channel condition of  users.  Then, we develop resource allocation policies for the second algorithm in which we optimally determine the scheduling duration  for the minimization of the padding overhead by taking into account airtime fairness and energy consumption of users. We also calculate how much overhead these policies will have in practice, and point out their implementation issues.      
\section{Related Work}\label{background}
OFDMA  has already been used in recent technologies including LTE \cite{LTE}. However, its applications to 802.11-based WLANs is currently under investigation. 
%While these studies unanimously highlight the superiority of OFDMA over the conventional CSMA/CA for the next generation WLANs, they differ in the approach and the main objective(s) each pursued in designing and exploiting the OFDMA-based MAC. 
In a majority of studies, including \cite{fallah2008hybrid,kwon2009generalized,wang2011two,jung2012group,lou2014multi}, enhancing the throughput has been the main focus. \cite{fallah2008hybrid} proposed a hybrid CSMA/CA and OFDMA MAC protocol operating in separate phases of transmission request and data transmission. In the request phase, the CSMA/CA mechanism is used, with the nodes contending in separate sub-channels allocated by the access point (AP). In a second phase, the AP schedules the winning stations in the time domain, and sends a scheduling frame to the nodes. Thus, OFDMA is only used during the transmission request phase. In contrast, in this paper we exploit OFDMA for data transmission to achieve transmission diversity. Kwon et al. \cite{kwon2009generalized} proposed another hybrid OFDMA and CSMA/CA MAC protocol which differs from \cite{fallah2008hybrid} in two ways: first, it exploits OFDMA for resource allocation in the  transmission phase. Second, no dedicated sub-channels are used in the contention phase. Wang et al. \cite{wang2011two} proposed to trigger multiple contention threads in each node. In this approach, a node can simultaneously contend for all sub-channels and use as many sub-channels as it seizes. Both uplink and downlink traffic where addressed in \cite{wang2011two} compared to \cite{fallah2008hybrid,kwon2009generalized} where only uplink traffic was addressed. In another work, Jung et. al. \cite{jung2012group} proposed GC-OFDMA, a group based hybrid OFDMA and CSMA/CA MAC protocol. Like \cite{fallah2008hybrid}, GC-OFDMA allows STAs to use CSMA/CA to contend on separate sub-channels. However, it differs from \cite{fallah2008hybrid} in that the STAs are organized in different groups, and only STAs in the same group contend for transmission request, though on different sub-channels. Second, GC-OFDMA applies multiuser resource allocation in the data transmission phase. 

%In this paper, we also use OFDMA based resource allocation. However, for scheduling we adopt a round-robin scheme. Also, unlike the aforementioned works, we use a dynamic OFDMA scheme by considering the Channel State Information (CSI) and the queue size of nodes for resource allocation. 

%Lou et. al. \cite{lou2014multi} proposed an OFDMA MAC protocol where AP is in control of both downlink and uplink resource allocation. AP sends transmission schedules for both downlink and uplink, but it performs polling before scheduling the uplink transmission. It is not clear how fairness is guaranteed in this protocol. Besides, the resource allocation is not dynamic since CSI is not used for resource allocation.       
% 
%Other studies, including \cite{valentin2008integrating,mishima2013novel} addressed QoS improvement in addition to throughput. \cite{valentin2008integrating} proposed a dynamic OFDMA MAC with Logical Link Control (LLC) functions modified to support OFDMA fragmentation. The proposed subcarrier allocation algorithm is rather complex and may not fit the real scenarios with dense topology. \cite{mishima2013novel} explicitly incorporates QoS requirements in their proposed cluster-based hybrid OFDMA and CSMA/CA protocol. While this protocol significantly improves QoS, its performance in terms of maximum throughput degrades substantially with the increase of nodes sending non-real time traffic. 
%
In contrast with the new OFDMA scheme proposed by IEEE 802.11ax Task Group \cite{ieeeax},  the aforementioned research studies do not take into account the padding overhead whose effect can be significant without any good resource allocation. Moreover, to the best of our knowledge  the  literature  is  largely  silent  on  the  analysis
of the effect of this overhead. In this paper, unlike previous studies, our work addresses  new problems arising from padding overhead in MU transmission by particularly focusing on the new OFDMA implementation in IEEE 802.11ax.        
\section{System Model and Research Problem}
We consider a fully-connected WLAN topology where in total there are
$N$ users with  uplink\footnote{ Although the problem that we consider and the proposed algorithms are applicable to both downlink and uplink scenarios, we consider only uplink scheduling since it is more challenging than the downlink scheduling due to the necessity of the AP acquiring queue size information from the STAs.} data.  We assume random channel gains between the AP  and the STAs that are independent across time and STAs. In practice, there is only a discrete finite set of $M$ Modulation and coding Schemes (MCS)  available, only a fixed set of data rates
$\Rc=\{r_1,r_2,\ldots,r_M\}$ can be supported.

In our model, we adopt OFDMA for uplink transmission in which the simultaneous transmission of multiple users with lower data rates is possible
%which has  become a popular choice for a number of wireless communication systems.  For example, in the latest IEEE 802.16e standard, OFDMA has been proposed as a prime physical layer technique
%to support high-data-rate transmission. Moreover, it has been decided that  OFDMA will be used as a PHY transmission technique in the next generation IEEE 802.11 standard namely 802.11ax\cite{IEEEOFDMAFW}. In a typical OFDMA transmission,
by dividing the available bandwidth  into many sub-bandwidth (i.e., sub-channels).  We assume that $K$ users can be scheduled due to the total bandwidth limitation, and $K \leq N$. As an example, if the AP sets the channel bandwidth to 20 MHz for the current transmission, and the bandwidth of each sub-channel is 5 MHz then at most 4 users can be scheduled at that transmission time (i.e., $K=4$).

We consider a group based transmission where  there are $L$  groups with  $K$ users\footnote{With this method, some users may not be assigned to any group since $K$ must be integer value. To be fair, those users can be replaced with the users which are already assigned to a group in an ordered manner.}  and each user  is assigned to one of these groups. By taking into account the practical limitation, which we will explain later, we consider round-robin type scheduling algorithm that is simple to implement, and distributes the total resources evenly among the groups. In round-robin scheduling a group is scheduled at each scheduling time (e.g., every 10 ms)  in a  cycle order for uplink transmission. Then the users in the scheduled group transmit their data to  the AP. We use the indicator
variable $I^{g}(t)$, and
$I^{g}(t)=1$ if group  $g$  is scheduled for transmission in slot $t$,
and $I^{g}(t)=0$ otherwise, and $g\in \{1,2,\cdots,G\}$. Each user maintains a separate queue. Packets arrive
according a stationary arrival process that is independent across
users and time slots. Let $A_{k}^g(t)$ be the amount of data arriving
into the queue of user $k$ in group $g$ at time slot $t$, where $k\in \{1,2,\cdots,K\}$ and $g\in \{1,2,\cdots,G\}$. Let $Q_k^g(t)$ and
$R_k^g(t) \in \Rc$ denote the queue length and transmission rate of user $k$ in group $g$
at time $t$, respectively.
The queue length dynamics for user $k\in \{1,2,\cdots,K\}$ in group  $g\in \{1,2,\cdots,G\}$  are given as
\begin{align}
Q_k^g(t+1)=[Q_k^g(t)-R_k^g(t)T_s(t)I^{g}(t)]^+ + A_k^g(t),   \label{eq:queuelength}
\end{align}
where $[y]^+=\max(y,0)$, $T_s(t)$ is the length of  the \textit{scheduling duration}  at time $t$. In other words,  the transmission of any user cannot take longer than $T_s(t)$ seconds. For analytical simplicity, we assume that a user only transmits data information, and the MAC and PHY layers overhead are neglected. However we take into account these overhead in Sec IV-C. Let $T_k^g(t)$ be the time that is required for the user $k$ to transmit all its data (i.e., $Q_k^g(t)$)  to the AP at transmission time $t$.  Then, we have,
\begin{align}
T_k^g(t)= \frac{Q_k^g(t)}{R_k^g(t)}. \label{eq:Tk}
\end{align}
We note that our analysis and algorithms are group based, and since each group is independent of other groups, and we omit the group index $g$ in the rest of the paper. 

Fig. 1 depicts a reference model of  OFDMA uplink (UL) scenario for an arbitrary group  with  $K$ scheduled users  at a time, which is similar to the one proposed in 802.11 ax WG \cite{IEEEOFDMA}. We call this method  as \textbf{Fixed PPDU (F-PPDU) algorithm}, where a Trigger Frame (TF) which indicates the group ID (or users IDs) that will be scheduled at that time, and also indicates the sub-channels that are assigned to each user in the group is sent by the AP.  TF also announces the scheduling duration $T_s$ which is fixed for each group at every scheduling time. Then, the user in the scheduled group transmit their uplink data to the AP. Note that $T_s$ can be too long for some users with short transmission duration due to small queue sizes and/or high transmission rate. Also, it may not be sufficient for the users with high arrival rate and/or bad channel conditions. As an example, for 802.11ac, the maximum transmission time is set to 5.484 ms regardless of  MCS level\cite{ieeeac}.

The main problem with the reference model is that fixing scheduling duration for each group and at every time can cause significant channel under utilization (e.g., queue sizes of users are small) or communication interruption (e.g.,  delay due to the insufficient amount of time to send necessary amount of data).  Moreover, the users with the transmission duration that is less than $T_s$ must send padding bits  until the end of the scheduling time.  Basically there are two  main reasons behind transmitting padding bits \cite{IEEEOFDMA},\cite{IEEEOFDMAFW}: first,  users can be able to complete their transmission at the same time and, any synchronization issues can be easily mitigated, and sending Block acknowledgment (BA) can be implemented in practice. Second,  if a user completes its transmission before the scheduling duration expires, and becomes silent (i.e., go to sleep mode) any other users from other WLANs (i.e., Overlapping Base Station Subsystem (OBSS ) can sense the channel idle, and start their transmission which can collide with the ACK (acknowledgment) packet transmission of the user. Due to these two important problems, it is mandatory to transmit padding bits in OFDMA transmission for 802.11ax \cite{IEEEOFDMAFW}. In practice, the padding bits are  is a sequence of 0 bits, and do not contain any information. Therefore, sending these bits wastes capacity and causes STAs to expend valuable energy. It is  also possible to use sub-channels with different number of sub-carriers instead of using a fixed number to align with the scheduling time. However, this method causes a user to achieve lower throughput, and also since only limited number of sub-channel width is available, there is always a certain amount of padding overhead in practice \cite{IEEEOFDMAFW}. Even so, the optimization of the scheduling duration with different size of sub-channels may be of independent interest.

We note that in F-PPDU algorithm, the AP is not aware of the  queue sizes, and channel conditions of the uplink users. Hence, the performance of F-PPDU can be very poor in practice due to the fixed $T_s$ and padding overhead. Next, we propose a new protocol to acquire these information, and to optimize the scheduling duration dynamically by taking into account user's requirements and energy consumption.
\section{MU-UL OFDMA with Buffer Status} 
\begin{figure}[t]
    \centering
    \includegraphics[width=0.42\columnwidth]{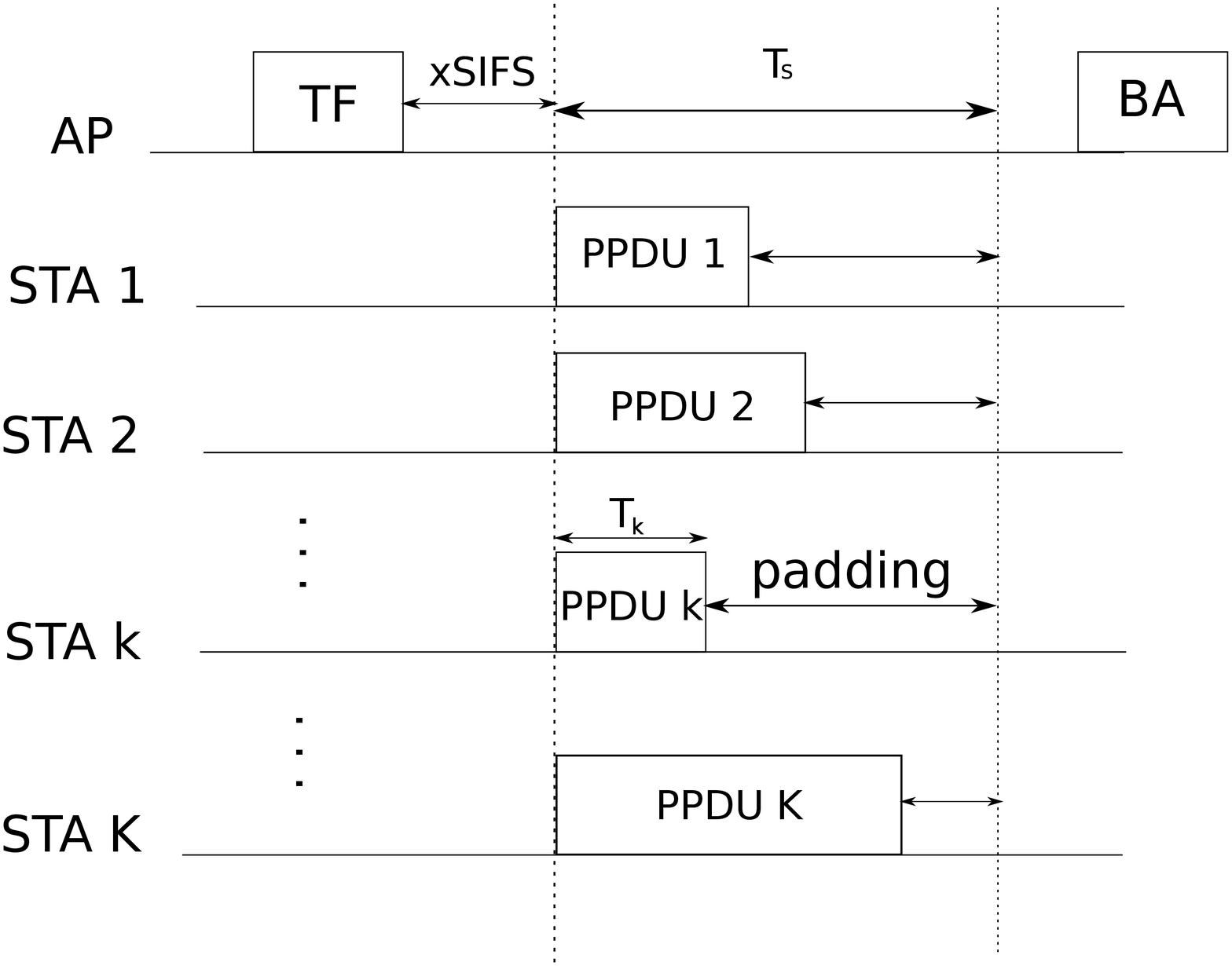}
    \caption{UL OFDMA  with fixed scheduling duration, $T_s$ }
    \label{fig:th}
\end{figure}
%\begin{figure}[t]
%    \centering
%    \includegraphics[width=0.76\columnwidth]{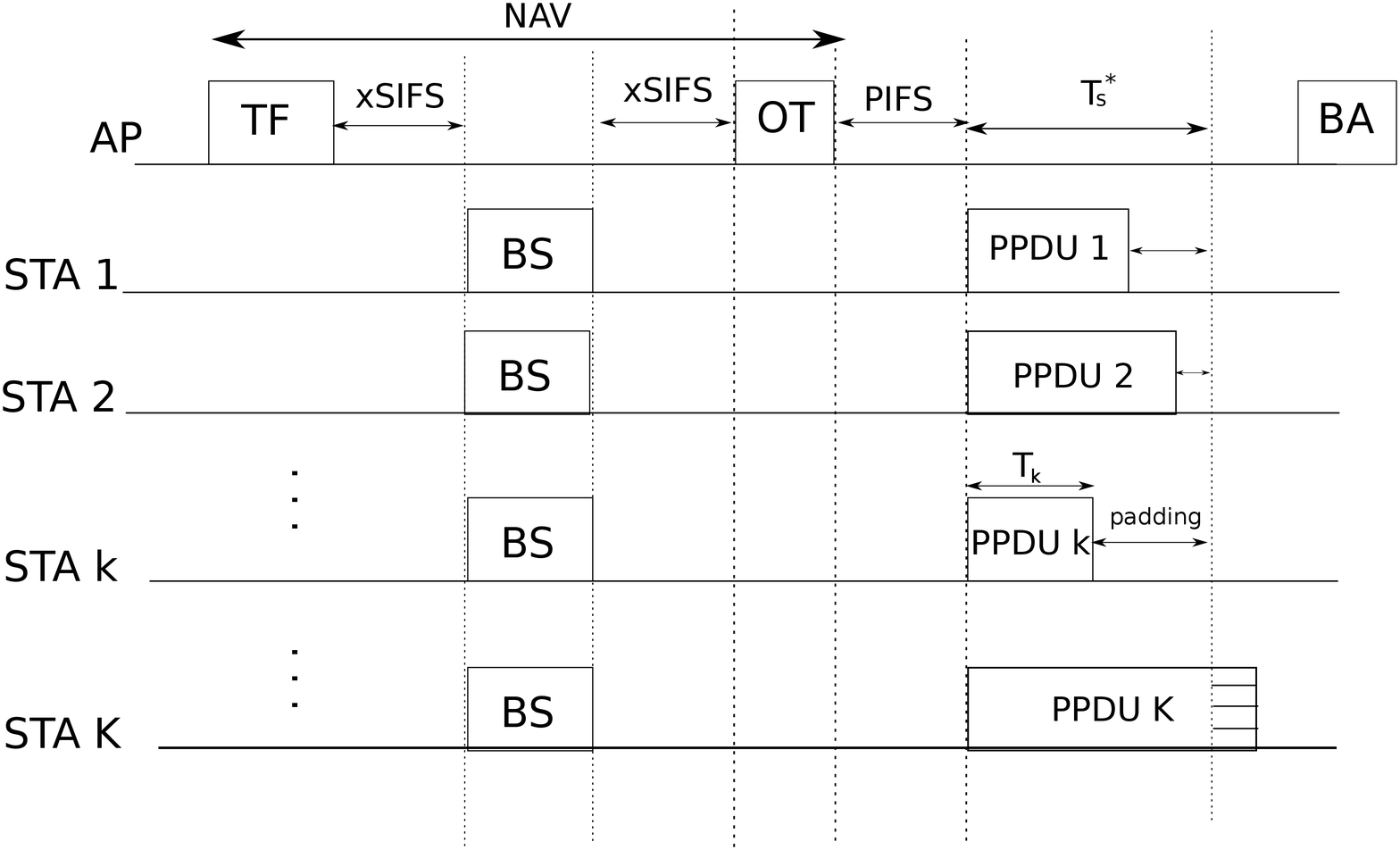}
%    \caption{UL OFDMA with optimized PPDU Length }
%    \label{fig:th}
%\end{figure}
In order to optimize the scheduling duration, the AP has to acquire the queue backlog information from the scheduled users at each time $t$. To enable that  we modify F-PPDU algorithm, and present a new protocol namely \textbf{Dynamic-PPDU (D-PPDU)} as follow: the AP first sends a TF which now only indicates the ID of the group (i.e, the IDs of the scheduled users), and the dedicated sub-channel to each scheduled user. After receiving the TF,  each user sends its buffer status (BS) information within the BS frame (i.e., $Q_k(t)$) by using its dedicated sub-channel. Also, the AP uses the BS frame to estimate the uplink transmission rate of user $k$ (i.e., $R_k(t)$). Finally, the AP has both queue size and channel state information of each scheduled user at that time, and can determine the optimal scheduling duration denoted by $T_s^*(t)$. After optimizing the scheduling duration, the AP sends the optimal scheduling duration within  Optimal scheduling Time (OT) frame. Then, each user determines how much data it should transmit, and adjusts its transmission duration which must be shorter or equal than $T_s^*(t)$.  Clearly, this new protocol introduces a certain level of overhead due to the frame exchanges, which we quantify at the end of this section. Next, we calculate the optimal scheduling duration that maximizes the total throughput for a given arbitrary group at a scheduling time $t$. 
\begin{figure}[t]
    \centering
    \includegraphics[width=0.54\columnwidth]{}
    \caption{UL OFDMA with optimized scheduling duration }
    \label{fig:th}
\end{figure} 
For a given $T_s(t)$, some users can transmit only a portion of its buffer due to large queue sizes and/or low transmission rate whereas some user can transmit all the data in their buffer. Let $q_{k}(t)$ be the amount of data in bits that is transmitted by user $k$, which depends on  $T_s(t)$, the current buffer size and transmission rate. Then, the total throughput is determined as follows: 
\begin{align}
D_{tot}(t)= \frac{q_1(t) + q_2(t)+ \dots, q_K(t)}{T_{s}(t)}=\sum_{k=1}^K \frac{q_k(t)}{T_{s}(t)} \label{eq:dtot}
\end{align}
Let $T_{min}(t)=\min_{k} {T_k(t)}$ and $T_{max}(t)=\max_{k} {T_k(t)}$, where $T_k(t)$ is given in \eqref{eq:Tk}. By assuming that $T_{min}(t) \leq T_s(t) \leq T_{max}(t)$ for all $t$, then the following Lemma determines the optimal scheduling  duration (e.g. PPDU duration) denoted as $T_s^*(t)$ that maximizes the total throughput at time $t$.
\begin{lemma}
\label{thm:main} For given $Q_k(t)$ and $R_k(t)$ for each user $k \in \{1,2, \cdots, K\}$ at time $t$, the optimal $T_s^*(t)$ that maximizes $D_{tot}(t)$ is equal to $T_{min}(t)$. i.e., $T_s^*(t)=T_{min}(t)$.
%\begin{align*}
%T_s^*(t)=T_{min}(t).
%\end{align*}
\end{lemma}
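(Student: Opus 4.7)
The plan is to reduce the statement to a per-user monotonicity argument once we write $q_k(t)$ explicitly as a function of the free variable $T_s(t)$. Since user $k$ transmits at rate $R_k(t)$ for at most $T_s(t)$ seconds and has only $T_k(t) = Q_k(t)/R_k(t)$ seconds worth of backlog, the number of bits actually delivered is
\begin{equation*}
q_k(t) \;=\; R_k(t)\,\min\bigl(T_k(t),\,T_s(t)\bigr).
\end{equation*}
Substituting this into \eqref{eq:dtot} gives
\begin{equation*}
D_{tot}(t) \;=\; \sum_{k=1}^{K} R_k(t)\,\frac{\min\bigl(T_k(t),T_s(t)\bigr)}{T_s(t)}.
\end{equation*}

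Next, I would analyze each summand as a function of $T_s(t)$ on the admissible interval $[T_{\min}(t),T_{\max}(t)]$. For user $k$ there are two regimes: if $T_s(t)\le T_k(t)$ then the $k$th summand equals $R_k(t)$ (a constant); if $T_s(t)>T_k(t)$, it equals $R_k(t)\,T_k(t)/T_s(t)$, which is strictly decreasing in $T_s(t)$. In particular, each summand is a nonincreasing function of $T_s(t)$ and attains its maximum value $R_k(t)$ precisely when $T_s(t)\le T_k(t)$.

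To conclude, I would observe that the condition $T_s(t)\le T_k(t)$ holds simultaneously for \emph{every} user $k$ if and only if $T_s(t)\le T_{\min}(t)$. Combined with the admissibility constraint $T_s(t)\ge T_{\min}(t)$, this forces $T_s(t)=T_{\min}(t)$ as the unique choice at which every summand reaches its individual maximum, so the sum $D_{tot}(t)$ attains its maximum value $\sum_{k=1}^{K}R_k(t)$ there. Any larger $T_s(t)$ strictly decreases at least the term corresponding to the user that achieves $T_{\min}(t)$ (and possibly more), so the bound is strict away from $T_{\min}(t)$.

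The argument is essentially a case split, so there is no real obstacle; the only subtlety to handle cleanly is the tie-breaking point $T_s(t)=T_k(t)$, which I would absorb into the ``$\le$'' side of the min, and the implicit assumption $T_{\min}(t)>0$ (needed so that $D_{tot}$ is well defined at $T_s^*(t)$), which follows from all $Q_k(t)\ge 0$ and $R_k(t)>0$ in the group being scheduled.
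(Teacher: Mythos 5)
Your proof is correct, and it reaches the paper's conclusion by a cleaner route. Both arguments rest on the same per-user case split — user $k$ delivers $q_k(t)=R_k(t)\min\bigl(T_k(t),T_s(t)\bigr)$ bits, so its contribution to \eqref{eq:dtot} saturates at $R_k(t)$ once $T_s(t)\le T_k(t)$ and decays like $1/T_s(t)$ afterwards — but the organization differs. The paper evaluates the aggregate throughput at three representative choices ($T_s=T_{max}(t)$, $T_s=T_{min}(t)$, and an intermediate value, partitioning the users into the sets $J_1$ and $J_2$ of those who do and do not empty their queues) and then asserts the chain $D^2_{tot}(t)>D^3_{tot}(t)>D^1_{tot}(t)$ with a ``Clearly.'' You instead show that each summand of $D_{tot}(t)$ is a nonincreasing function of $T_s(t)$ on all of $[T_{min}(t),T_{max}(t)]$, constant at its maximum $R_k(t)$ for $T_s(t)\le T_k(t)$ and strictly decreasing beyond, so the sum is maximized at the left endpoint, with strictness away from $T_{min}(t)$ supplied by the term of the user attaining $T_{min}(t)$. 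This per-summand monotonicity argument actually proves the inequalities the paper only asserts, dispenses with the $J_1/J_2$ partition in the intermediate regime, and makes explicit that the optimal value is $\sum_{k=1}^{K}R_k(t)$. Your handling of the tie $T_s(t)=T_k(t)$ and the remark that $T_{min}(t)>0$ is needed for well-definedness are both sound and consistent with the paper's own case $T_s(t)=T_k(t)$, where $D_k(t)=R_k(t)$.
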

\begin{proof}
First we determine the throughput of a scheduled user $k$ denoted by $D_k(t)$ for a given $T_s(t)$.If $T_s(t) > T_k(t)$, then all the data in the buffer of user $k$ is transmitted (i.e., $q_k(t)=Q_k(t)$), and $D_k(t)= \frac{Q_k(t)}{T_s(t)}$. If $T_s(t) < T_k(t)$, then only $q_k(t)=Q_k \frac{T_s(t)}{T_k(t)}$ amount of bits\footnote{ In practice only discrete number of bits can be transmitted, however, for simplicity we neglect this fact.} can be transmitted within $T_s(t)$ seconds, hence,  by using \eqref{eq:Tk} $D_k(t)$ is equal to the transmission rate, i.e., $D_k(t)=R_k(t)$. If $T_s(t) = T_k(t)$ user $k$ can discharges its queue, and  $D_k(t)=R_k(t)$.

First, let us assume that $T_s(t) =T_{max}(t)$, then  all the scheduled users would be able to empty their queues. Thus, the overall throughput is given by,
 \begin{align*}
D^1_{tot}(t)= \frac{Q_1(t) + Q_2(t)+ \dots, Q_K(t)}{T_{max}(t)}=\sum_{k=1}^K \frac{Q_k(t)}{T_{max}(t)} 
 \end{align*}
Now, let us assume that $T_s(t) =T_{min}(t)$,  then a scheduled user $k$ can transmit only $Q_k(t) \frac{T_{min}(t)}{T_k(t)}$ amount of bits, and  the throughput of that user will be equal to $ Q_k(t) \frac{T_{min}(t)}{T_k(t)T_{min}(t)}= \frac{Q_k(t)}{T_k(t)}$. In this case, the total throughput becomes
 \begin{align*}
D^2_{tot}(t)= \frac{Q_1(t)}{T_1(t)} + \frac{Q_2(t)}{T_2(t)} + \dots, + \frac{Q_K(t)}{T_K(t)}= \sum_{k=1}^K \frac{Q_k(t)}{T_{k}(t)}
 \end{align*}
Clearly, $D^2_{tot}(t) > D^1_{tot}(t)$. Finally, let us assume $T_{min}(t) < T_s(t)  < T_{max}(t)$. In this case, some of the scheduled users can empty their queues, some of them can only transmit a portion of the data in their queues. Without loss of generality, let $J_1=\{Q_{a_1}(t), Q_{a_2}(t),\dots, Q_{a_A}(t)  \}  $ be the group of users which can discharge their queues, and let $J_2=\{Q_{b_1}(t), Q_{b_2}(t),\dots, Q_{b_B}(t)  \}  $ be the group of users which cannot empty their queues. Then, the total throughput of group $J_1$ and group $J_2$ are given by $D_{J_1}(t)=\frac{Q_{a_1}(t)+  Q_{a_2}(t) +\dots + Q_{a_A}(t) }{T_s(t)}$, and  $D_{J_2}(t)= \frac{Q_{b_1}(t)}{T_{b_1}(t)} + \frac{Q_{b_2}(t)}{T_{b_2}(t)}+ \dots + \frac{Q_{b_B}(t)}{T_{b_B}(t)} $, respectively. The overall throughput is equal to $D^3_{tot}(t)= D_{J_1}(t) + D_{J_2}(t)$.
% \begin{align*}
%D^3_{tot}(t)= D_{J_1}(t) + D_{J_2}(t).
% \end{align*} 
Clearly, $ D^2_{tot}(t) > D^3_{tot}(t) > D^1_{tot}(t)$. Hence, the maximum overall throughput is achieved when $T_s^*(t) = T_{min}(t)$. This completes the proof.
\end{proof}
\subsection{Minimizing Padding  Overhead}
We note that if the AP sets $T_s^*(t) = T_{min}(t)$ at every time, only the users with  low queue size and/or good channel conditions can empty their queue whereas the other users have to send their remaining data in the next scheduling time. This setting has two drawbacks: it causes unfairness in using airtime and  some users experience longer delay. These drawbacks can be eliminated by setting $T_s(t)$ to large values for instance $T_s^*(t) = T_{max}(t)$. However, at this time we sacrifice the throughput since  more padding overhead must be transmitted. Here, we are interested in the problem of minimizing the padding overhead while resolving the fairness and delay issues. 

Let $H_k(t)$ be the padding overhead of user $k$  in an arbitrary group at time $t$:
\begin{align}
H_k(t) =& \left\{ \begin{array}{l l}
                     T_s(t) - T_k(t)          & \text{; if $T_s(t) > T_{k}(t)$}\\
                     0      & \text{; otherwise}
                \end{array} \label{eq:po}
    \right.
\end{align}
Let $H_{tot}(t)$ be the total padding overhead which is given as $H_{tot}(t) = \sum_{k=1}^{K} H_{k}(t)$.  To overcome the fairness issue we first define our fairness  parameter which measures the proportion of the time at which user $k$ can empty its buffer (i.e., all data in the buffer is transmitted) as follows: 
\begin{align}
F_k(t) =& \left\{ \begin{array}{l l}
                    1            & \text{; if $T_s(t) \geq T_k(t)$}\\
                    0      & \text{; otherwise}
                \end{array} \label{eq:U}
    \right.
\end{align}
In other words, if  $T_s(t) \geq T_k(t)$ then user $k$ is able to transmit all its data at time $t$. If $T_s(t) < T_k(t)$, then user $k$ cannot empty its buffer at that time.  We
define the time average total expected padding overhead for a given group as follows.
\begin{align}
\bar{H}_{tot} = \limsup_{t\rightarrow\infty}
\frac{1}{t}\sum_{\tau=0}^{t-1}{\mathbb E}[{H}_{tot}(\tau)],
\end{align}
where the expectation is taken over the random transmission rates (random
arrival and channel conditions). Then we consider the following optimization problem which aims to minimize the total expected padding  overhead while satisfying  the minimum performance
constraint in terms of the defined fairness denoted as $C_k$  for each user $k$:
\begin{align}
\min\limits_{T_s(t)}\ & \bar{H}_{tot} \notag\\
\text{s.t.}\ {\mathbb E}&\left[F_k(t)\right]
\geq C_k, \quad k\in\{1,2,\ldots,K\}
\label{eq:problem}
\end{align}
where $0 < C_k \leq 1$, and the long-run fraction of time that users complete transmitting all data should be at least $C_k$. As an example if $C_k=0.6$ for user $k$,  it means that 60\% percent of the scheduled times, user $k$ requires to be able to empty its buffer.   It is worth noting that the scheduling duration is optimized for a given number of users due to the round-robin policy. One can also optimize the users to be scheduled. However, it requires the global queue size information from all users in the network, which may not be unaffordable in practice. Due to this reason and its simplicity we employ round-robin policy where only queue size information of a group is needed.  
We next present our solution to the optimization problem defined  in (\ref{eq:problem}). We recall that the dynamics in each groups evolve independently from other groups. Then,  the optimal strategy to solve problem (\ref{eq:problem})  is the same for each group. 

Since the problem in (\ref{eq:problem}) is a stochastic optimization problem with the expected objective and constraints one can use the techniques in \cite{Liu:Comp03}. In this work, we  solve problem
\eqref{eq:problem} using the stochastic network optimization tool of
\cite{Georgiadis:Resource06} which can provide more simplistic solutions \cite{Karaca:cognitive}. First, for each of the
constraints in \eqref{eq:problem}, we construct a virtual queue such
that the queue dynamics for user $k$ in the group is given by
\begin{align}
X_k(t+1) &= [X_k(t) - F_k(t)]^+ + C_k, \label{eq:X}
\end{align}
$k\in\{1,2,\ldots,K\}$, where $[x]^+ \triangleq \max\{x,0\}$. Note
that stabilizing the queues in \eqref{eq:X} is equivalent to
satisfying the constraints in \eqref{eq:problem} since a queue is
stable if the arrival rate is less than the service rate. Let $\Xv(t) = (X_1(t) X_2(t) \cdots
X_K(t))$ be the vector of virtual queues of $K$ users.  In practice, it is upper-bounded by a constant value (i.e., $T_s(t) \leq T_s^{max}$ for all $t$, and  the maximum PPDU duration is set to 5.484 ms with 802.11ac). Hence, $H_{tot}(t)$ is bounded such that  $H_{tot}(t) \leq H_{max}$  for all $g\in\{1,2,\ldots,G\}$.  We define the following
quadratic Lyapunov function and conditional Lyapunov drift:
\begin{align}
&L(\Xv(t)) \triangleq \frac{1}{2} \sum_{k=1}^{K} X_k^2(t),\\
&\Delta(\Xv(t)) \triangleq {\mathbb E}\left[ L(\Xv(t+1)) -
L(\Xv(t)) | \Xv(t)\right]\label{eq:drift1}.
\end{align}
The following Lemma is useful in establishing the optimality of our
algorithm.
\begin{lemma}
\label{lemma:1} For every time slot $t$ and any policy that determines $T_s(t)$, the
following bound holds:
\begin{align}
\Delta&   (\Xv(t)) + V{\mathbb E}[H_{tot}(t)|\Xv(t)] \leq B_1 +
\sum_{k=1}^{K}X_k(t)C_k\notag\\& + 
\sum_{k=1}^{K} {\mathbb E}[H_k(t)V-X_k(t)F_k(t) |\Xv(t) ]  \label{eq:lemma1}
\end{align}
where $B_1 = \frac{1}{2}\left(\sum_{k=1}^{K}C_k^2 +
K\right)$ and $V$ is a system parameter that
characterizes a tradeoff between performance optimization and delay
in the virtual queues.
\end{lemma}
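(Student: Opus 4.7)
The plan is to follow the standard Neely-style drift-plus-penalty argument. The starting point is to square the recursion for $X_k(t+1)$ and use that squaring commutes with $[\cdot]^+$ in a one-sided way. Specifically, for any $Q, b \ge 0$ and $a \ge 0$, the bound $([Q-b]^+ + a)^2 \le Q^2 + a^2 + b^2 + 2Q(a-b)$ is the workhorse; applied to the recursion \eqref{eq:X} with $Q=X_k(t)$, $b=F_k(t)$, $a=C_k$, this yields
\begin{align*}
X_k(t+1)^2 - X_k(t)^2 \le C_k^2 + F_k(t)^2 + 2 X_k(t)\bigl(C_k - F_k(t)\bigr).
\end{align*}

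Next I would use the fact that $F_k(t)\in\{0,1\}$ by \eqref{eq:U}, so $F_k(t)^2 \le 1$. Summing over $k=1,\dots,K$, dividing by $2$, and taking conditional expectation given $\Xv(t)$ gives
\begin{align*}
\Delta(\Xv(t)) \le \tfrac{1}{2}\!\sum_{k=1}^K (C_k^2+1) + \sum_{k=1}^K X_k(t) C_k - \sum_{k=1}^K \mathbb{E}[X_k(t) F_k(t)\mid \Xv(t)],
\end{align*}
where the first term is exactly $B_1$ and $X_k(t)$ can be pulled out of the conditional expectation since it is $\Xv(t)$-measurable.

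Finally, I would add $V\,\mathbb{E}[H_{tot}(t)\mid \Xv(t)]$ to both sides and use the decomposition $H_{tot}(t)=\sum_{k=1}^K H_k(t)$ to combine the penalty term with the $-X_k(t)F_k(t)$ term under a single summation, producing exactly the right-hand side claimed in \eqref{eq:lemma1}. No step is really a serious obstacle; the only subtlety is being careful that the bound $F_k^2\le 1$ (rather than $F_k^2\le F_k$, which is also true here) is what yields the clean constant $B_1$, and that the inequality is valid for \emph{any} feasible choice of $T_s(t)$ because the queue-square bound does not depend on how $F_k(t)$ and $H_k(t)$ are selected.
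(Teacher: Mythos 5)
Your proposal is correct and follows essentially the same route as the paper's own proof: squaring the virtual-queue recursion \eqref{eq:X} via the standard bound $([Q-b]^+ + a)^2 \le Q^2+a^2+b^2+2Q(a-b)$, bounding $F_k(t)^2$ by $1$ to get $B_1$, taking the conditional expectation, and then adding $V\,\mathbb{E}[H_{tot}(t)\mid\Xv(t)]$ with the decomposition $H_{tot}(t)=\sum_k H_k(t)$. Your write-up is in fact slightly more explicit than the paper about where the constant $K$ in $B_1$ originates.
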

\begin{proof}
%The proof is given in Appendix \ref{sec:lemma1}.
We can write the following inequality by using the fact $([a]^+)^2
\leq (a)^2, \quad \forall a$:
\begin{align*}
X_k^2(t+1) \leq& \ X_k^2(t) + C_k^2 + (1)^2 -
2X_k(t)[F_k(t)-C_k]
\end{align*}
for $k\in\{1,2,\ldots,k\}$.  Therefore,
the Lyapunov drift in \eqref{eq:drift1} can be upper bounded as
%\begin{align*}
%\Delta_1(\Xv(t)) \leq& \sum_{k=1}^{K} \frac{1}{2}C_k^2 +
%\frac{1}{2}{\mathbb E}[(F_k(T_s(t)))^2|\Xv(t)] -
%X_k(t)\bar{U}_m^t\notag\\& + X_m(t)C_m
%\end{align*}
%Using the bounds on the utility functions $U^{max}$, we have
\begin{align}
\Delta_1(\Xv(t)) \leq& \ B_1 - \sum_{k=1}^{K} {\mathbb E} [X_k(t)(F_k(t)- C_k) | \Xv(t)] \label{eq:Lyp_drift}
\end{align}
where $B_1 = \frac{1}{2}\left(\sum_{k=1}^{K}C_k^2 +
K\right)$. In addition, we define a cost function
$\mathbb{E}[H_{tot}(t)|\Xv(t)]$ as the expected total padding overhead during time slot $t$. After adding the cost function
multiplied by $V$ to both sides of (\ref{eq:Lyp_drift}),  and expanding and rearranging
the terms follows \eqref{eq:lemma1}.
\end{proof}
Now, we present our D-PPDU algorithm which  optimally solves  the problem in \eqref{eq:problem}.

\textbf{D-PPDU Algorithm:}
At  time slot $t$ suppose that group $g$ is scheduled according to the round-robin policy. Then, observe the virtual queue backlog $X_k(t)$ for each user  $k$ in the group and the transmission duration $T_k(t)$, and determine $T_s^*(t)$ solving the following
minimization problem:
\begin{align}
& T_s^*(t) =  \label{eq:policy1} \\ \notag
 &\argmin\limits_{T_{min}(t) \leq T_s(t) \leq T_{max}(t)}\left\{\sum_{k=1}^{K} H_k(t) -\frac{X_k(t)}{V}F_k(t)\right\}. 
\end{align}
Then, update the virtual queues according to the queue dynamics in
\eqref{eq:X}.
\label{thm:1}

%\begin{proof}
%Due to the space limitation, we give a sketch of the proof. Basically, the proof is similar to the proof of  \cite{Georgiadis:Resource06}.  Specifically, first
%define the following
%quadratic Lyapunov function and conditional Lyapunov drift:
%\begin{align}
%&L_1(\Xv(t)) \triangleq \frac{1}{2} \sum_{n=1}^{N} X_n^2(t),\\
%&\Delta_1(\Xv(t)) \triangleq {\mathbb E}\left[ L_1(\Xv(t+1)) -
%L_1(\Xv(t)) | \Xv(t)\right]\label{eq:drift1}.
%\end{align}
%Then, find an upper bound on the conditional Lyapunov drift for a quadratic Lyapunov function of virtual queues. Finally minimize this upper bound plus $V{\mathbb E}[{T}^{po}(t) ]$. The optimal algrithm that minimizes this bound is the algorithm in \eqref{eq:policy}
%\end{proof}
It is worth noting that parameter $V$ specifies a tradeoff between optimality and the
average length of the virtual queues. Thus, for large virtual
queues, the system experiences larger transient times to achieve the
optimal performance and hence needs more time to adapt to possible
changes in channel and arrival statistics \cite{Georgiadis:Resource06},\cite{Karaca:cognitive}. The structure of D-PPDU algorithm in \eqref{eq:policy1}  suggests that if $T_s^*(t)$ is  unfavorable for a user $k$, then, consequently, the virtual queue size of that user will increase. In this case,  at the next scheduling time the resulting optimal $T_s^*(t)$ will be larger to satisfy that user  according to the policy in \eqref{eq:policy1}. 

It is important to note that depending on the distribution of transmission duration it may not be always possible to satisfy the set of the constraints of the users.  Clearly the feasibility depends on $T_s^{max}$.  We here assume that $C_k$, $k \in \{1,2,\dots,K\}$ are feasible, and let the feasibility region
of problem \eqref{eq:problem} be $\Lambda$ and let {\boldmath $\epsilon$}
$\triangleq (\epsilon\ \epsilon\cdots\ \epsilon)$. Note that If the vector $\Ck = (C_1\ C_2\cdots C_K)$ is
feasible (i.e., $\Ck \in \Lambda$), then there exists $\epsilon > 0$
such that ($\Ck + ${\boldmath $\epsilon$}) $\in \Lambda$.

Next, we give our performance bound of D-PPDU algorithm. 
\begin{theorem} 
Let $H_{tot}^{*}$ be the minimum  padding overhead for problem (\ref{eq:problem}). If  $\Ck$
is strictly interior to $\Lambda$, then D-PPDU satisfies the following bound:
\begin{align}
\Delta (t) +  V \mathbb{E}[H_{tot}|\Xv(t)] \leq B_1-\epsilon
\sum_{i=1}^N X_k(t) + V H_{tot}^{*} , \label{eq:boundth1}
\end{align}
where $ \Ck + ${\boldmath $\epsilon$} $\in \Lambda$. Then, applying the theorem in  \cite{Georgiadis:Resource06},  the system is stable and the time average backlog satisfies:
\begin{align}
 \limsup_{T\rightarrow \infty} \frac{1}{T} \sum_{t=0}^{T-1}
\sum_{k=1}^K \mathbb{E} [X_k(t)] \leq \frac{B_1+V H_{tot}^{*}}{\epsilon}  \label{eq:virbound}
\end{align}
and
\begin{align}
 \limsup_{T\rightarrow \infty} \frac{1}{T} \sum_{t=0}^{T-1}
 \mathbb{E}[H_{tot}] \leq H_{tot}^{*} +  \frac{B_1}{V}
 \label{eq:optbound}
\end{align}
%while the time average energy satisfies:
%\begin{align*}
% \limsup_{T\rightarrow \infty} \frac{1}{T} \sum_{t=0}^{T-1}
% \mathbb{E}\{G(t)\} \leq g^* +  \frac{B}{V}
%\end{align*}
\end{theorem}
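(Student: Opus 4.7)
The plan is to build on Lemma~\ref{lemma:1} and then invoke the standard Lyapunov optimization argument of Georgiadis et al. First I would observe that the D-PPDU rule in \eqref{eq:policy1} is precisely the policy that, for the given queue state $\Xv(t)$, minimizes the conditional expectation on the right-hand side of \eqref{eq:lemma1} over all choices of $T_s(t)\in[T_{min}(t),T_{max}(t)]$. Hence for any alternative (possibly randomized) feasible policy $\pi$, the D-PPDU choice gives an upper bound of the form
\begin{align*}
\Delta(\Xv(t)) + V\mathbb{E}[H_{tot}(t)|\Xv(t)]
\leq B_1 + \sum_{k=1}^{K} X_k(t) C_k + \sum_{k=1}^{K}\mathbb{E}\!\left[V H_k^{\pi}(t) - X_k(t) F_k^{\pi}(t)\,\middle|\,\Xv(t)\right].
\end{align*}

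The key step is to exhibit a good comparison policy $\pi$. Since $\Ck$ lies strictly inside $\Lambda$, there exists $\epsilon>0$ with $\Ck+\boldsymbol{\epsilon}\in\Lambda$. By standard arguments in stochastic network optimization there is then a stationary randomized policy $\pi^{*}$, depending only on the observed $(Q_k(t),R_k(t))$ and independent of $\Xv(t)$, that satisfies $\mathbb{E}[F_k^{\pi^{*}}(t)]\geq C_k+\epsilon$ for every $k$ and achieves the minimum time-average padding overhead $\mathbb{E}[H_{tot}^{\pi^{*}}(t)]=H_{tot}^{*}$. Substituting this policy into the inequality above, using its independence of $\Xv(t)$ to pull the expectation through, and collecting terms yields exactly \eqref{eq:boundth1}:
\begin{align*}
\Delta(\Xv(t)) + V\mathbb{E}[H_{tot}(t)|\Xv(t)]
\leq B_1 - \epsilon\sum_{k=1}^{K} X_k(t) + V H_{tot}^{*}.
\end{align*}

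With \eqref{eq:boundth1} in hand, the remaining two bounds are routine. Taking total expectation, summing the telescoping drift $\mathbb{E}[L(\Xv(t{+}1))-L(\Xv(t))]$ from $t=0$ to $T-1$, dividing by $T$, using $L(\Xv(0))=0$ and $L(\Xv(T))\geq 0$, and letting $T\to\infty$ gives \eqref{eq:virbound} for the time-average virtual-queue backlog and \eqref{eq:optbound} for the time-average padding overhead, exactly as in the Lyapunov optimization framework of \cite{Georgiadis:Resource06}.

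The main obstacle is justifying the existence of the stationary randomized policy $\pi^{*}$ with the required $\epsilon$-slack and optimal cost. This is the step that converts the per-slot minimization of the D-PPDU rule into a time-average guarantee; once it is invoked, the rest is algebraic manipulation of \eqref{eq:lemma1} and a standard telescoping argument. The rest of the bounds then emerge mechanically, with $V$ governing the familiar $[O(1/V),O(V)]$ tradeoff between optimality in $H_{tot}$ and the size of the virtual queues (and hence the convergence time for the fairness constraints).
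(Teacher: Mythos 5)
Your proposal is correct and follows essentially the same route as the paper: compare the D-PPDU per-slot minimization of the right-hand side of Lemma~\ref{lemma:1} against an $\Xv(t)$-independent stationary randomized policy whose existence follows from $\Ck$ being strictly interior to $\Lambda$, then telescope the drift to obtain \eqref{eq:virbound} and \eqref{eq:optbound}. The paper's own proof is only a sketch deferring to \cite{Georgiadis:Resource06}; your write-up supplies the same steps in more detail.
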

\begin{proof}
The proof is similar to the proof of Theorem in \cite{Georgiadis:Resource06}. The proof involves showing that there exits  a randomized algorithm that achieves the minimum padding overhead for a given $ ${\boldmath $\epsilon$}, and  satisfies the constraint without taking into account the virtual queue sizes.  Next,  since D-PPDU minimizes the right-hand side of the bound \eqref{eq:lemma1} all the time, then the proof shows that  D-PPDU satisfies  \eqref{eq:boundth1}. Then by following the similar techniques to those in \cite{Georgiadis:Resource06}, the bounds \eqref{eq:virbound} and \eqref{eq:optbound} are found.
\end{proof} 
\subsection{ Energy Consideration }
\label{sec:energy} Energy consumption of  wireless devices is an essential problem especially for uplink users with limited energy resources. However, at the same time  uplink users aim to transmit as much data as possible to empty its buffer since usually the uplink traffic is low, and buffering data in the next transmission time can lead large delay. For this purpose, our
objective is to investigate  the policies that can capture a good tradeoff between the energy consumption for transmitting padding bits and the average number of time at which a user can discharger its buffer to avoid large delay. 
Our objective is to find an optimal policy that can determine  scheduling duration dynamically by taking account this tradeoff. For given $T_s(t)$ we define the energy consumption of user $k$ caused by transmitting both data and padding bits is equal to $E_k(t) =T_s(t)  \times P$. 
%\begin{align}
%E_k(t) =T_s(t)  \times P.
%\end{align}
%Let $H_{tot}^{g}(t)$ be the total padding overhead which is given as $H_{tot}^g(t) = \sum_{k=1}^{K} H_{k}^{g}(t)$. It is easy to see that $H_{tot}^{g}(t)$ is minimized at each uplink transmission  when $T_s^*(t) = T_{min}^g(t)$. However, this solution causes a fairness issue for the users with high traffic load since those users are only able to transmit  a small portion of their data although the other users can discharge their queues due to their low arrival rates. To overcome this problem we first define a new fairness  parameter which measures the proportion of the time at which user $k$ discharge its buffer (i.e., all data in the buffer is transmitted) as follows: 
%\begin{align}
%F_k^g[T_s(t)] =& \left\{ \begin{array}{l l}
%                    1            & \text{; if $T_s(t) \geq T_k^g(t)$}\\
%                    0      & \text{; otherwise}
%                \end{array} \label{eq:U}
%    \right.
%\end{align}
%In other words, if the PPDU duration is higher than the channel occupancy duration of user $k$ (i.e.,  $T_s(t) \geq T_k(t)$) then user $k$ is able transmit all its data. If $T_s(t) < T_k(t)$, then user $k$ only transmit within $T_s^*$ duration which is not sufficient to transmit all its data.
Since $T_s(t)$ and $P$  are upper bounded, $E_k(t) \leq E_{max}$ for each user at every scheduling time. By using the definition of $F_k(t)$ in \eqref{eq:U}, we define ${S}_{tot}(t)$ as ${S}_{tot}(t) =  \sum_{k=1}^{K} F_k(t)$
%\begin{align}
%{S}_{tot}(t) =  \sum_{k=1}^{K} F_k(t),  \label{eq:S}
%\end{align}
and ${S}_{tot}(t) \leq K$. The time average expected ${S}_{tot}(t)$  is given as:
\begin{align}
\bar{S}_{tot} = \liminf_{t\rightarrow\infty}
\frac{1}{t}\sum_{\tau=0}^{t-1}{\mathbb E}[{S}_{tot}(\tau)],
\end{align}
where the
expectation is taken over the random transmission rates (random
arrival and channel conditions). Then, we consider the following optimization problem:
\begin{align}
\max\limits_{T_s(t)}\ & \quad \bar{S}_{tot} \notag\\
\text{s.t.} \ {\mathbb E}&\left[E_k(t)\right]
\leq E_k^{tot}, \quad k\in\{1,2,\ldots,K\} \label{eq:problem2}
\end{align}
where $ E_k^{tot}$ is the average energy consumption that a user $k$ can spend during the overall transmission including the transmission of padding bits, and $E_{max}^{tot}=\max_k E_k^{tot} $.  We construct a virtual queue such
that the queue dynamics for user $k$ in an arbitrary group is given by,
\begin{align}
Y_k(t+1) &= [Y_k(t) - E_k^{tot}]^+ + E_k(t), \label{eq:Y}
\end{align}
where $k\in\{1,2,\ldots,K\}$, and let $\Yv(t) = (Y_1(t) Y_2(t) \cdots
Y_K(t))$ be the vector of virtual queues.
By following the same approach that we use for the derivation of  D-PPDU the algorithm, we next give our Energy Aware D-PPDU (EAD-PPDU) algorithm.

\textbf{EAD-PPDU Algorithm:}
At  time slot $t$ suppose that group $g$ is scheduled according to the round-robin policy. Then, observe the virtual queue backlog $Y_k(t)$ for each user  $k$ in the group  and the transmission duration $T_k(t)$, and choose $T_s^*(t)$ solving the following
maximization problem:
\begin{align}
 T_s^*(t) =  \label{eq:policy2}  
 \argmax_{T_s(t)}\left\{\sum_{k=1}^{K} F_k(t) -\frac{Y_k(t)}{V}E_k(t)\right\}. 
\end{align}
\label{thm:2}
We note that for problem (\ref{eq:problem2}) $T_s(t)$  can be less than  $T_{min}(t)$ due to the total energy constraint.  Next, we give our performance bound of EAD-PPDU algorithm. 
\begin{theorem} 
Let $S_{tot}^{*}$ be the optimal solution for problem (\ref{eq:problem2}). Then EAD-PPDU satisfies the following bound:
\begin{align}
\Delta (t) -  V \mathbb{E}[S_{tot}(t)|\mathbf{Y}(t)] \leq B_2-\epsilon
\sum_{k=1}^K Y_k(t) - V S_{tot}^{*} , \label{eq:boundth2}
\end{align}
Then, applying the theorem in  \cite{Georgiadis:Resource06},  the system is stable and the time average backlog satisfies:
\begin{align}
 \limsup_{T\rightarrow \infty} \frac{1}{T} \sum_{t=0}^{T-1}
\sum_{k=1}^K \mathbb{E}\ [Y_k(t)] \leq \frac{B_2+V K}{\epsilon}  \label{eq:virbound2}
\end{align}
and
\begin{align}
 \liminf_{T\rightarrow \infty} \frac{1}{T} \sum_{t=0}^{T-1}
 \mathbb{E}[S_{tot}(t)] \geq S_{tot}^{*} -  \frac{B_2}{V}
 \label{eq:optbound2}
\end{align}
where $B_2= \frac{1}{2}\left( K ( (E_{max})^2  + (E_{max}^{tot})^2
\right)$ and $\epsilon$ can be defined as in \cite{Georgiadis:Resource06}.
%while the time average energy satisfies:
%\begin{align*}
% \limsup_{T\rightarrow \infty} \frac{1}{T} \sum_{t=0}^{T-1}
% \mathbb{E}\{G(t)\} \leq g^* +  \frac{B}{V}
%\end{align*}
\end{theorem}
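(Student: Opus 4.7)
The plan is to mirror the drift-plus-penalty argument used for Theorem~1, adapted to the \emph{maximization} problem in \eqref{eq:problem2} and to the energy virtual queues in \eqref{eq:Y}. The roles of arrivals and service are swapped in $Y_k(t)$ compared with $X_k(t)$: here $E_k(t)$ plays the role of the arrival and $E_k^{tot}$ the service, both bounded by $E_{max}$ and $E_{max}^{tot}$ respectively. So the first step is to derive an analogue of Lemma~\ref{lemma:1} for $L(\mathbf{Y}(t)) = \frac{1}{2}\sum_k Y_k^2(t)$, using the standard inequality $([y-b]^+ + a)^2 \leq y^2 + a^2 + b^2 + 2y(a-b)$ term by term. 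Summing over $k$ and taking conditional expectations yields
\begin{align*}
\Delta(\mathbf{Y}(t)) \leq B_2 + \sum_{k=1}^{K} Y_k(t)\,\mathbb{E}\!\left[E_k(t) - E_k^{tot}\,\big|\,\mathbf{Y}(t)\right],
\end{align*}
with $B_2 = \tfrac{1}{2} K\bigl((E_{max})^2 + (E_{max}^{tot})^2\bigr)$, matching the constant in the theorem statement.

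Next I would subtract $V\,\mathbb{E}[S_{tot}(t)\mid\mathbf{Y}(t)]$ from both sides and rearrange so that the right-hand side is the conditional expectation of the quantity $\sum_k Y_k(t)E_k(t) - V S_{tot}(t)$ plus a term independent of the control action (the $-\sum_k Y_k(t) E_k^{tot} + B_2$ part). The key observation is that EAD-PPDU, by \eqref{eq:policy2}, chooses $T_s^{*}(t)$ to \emph{maximize} $\sum_k F_k(t) - \frac{Y_k(t)}{V} E_k(t)$, which is equivalent (after multiplying by $V$ and negating) to minimizing exactly that action-dependent term pointwise in the sample space. Hence EAD-PPDU pushes the drift-plus-penalty bound at least as low as any other feasible (possibly randomized) policy.

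The final step is to invoke the existence of an $\mathbf{Y}(t)$-independent stationary randomized policy $\pi^\ast$ that, under feasibility with slack vector $\boldsymbol{\epsilon}$, simultaneously achieves $\mathbb{E}[S_{tot}^{\pi^\ast}] = S_{tot}^{*}$ and $\mathbb{E}[E_k^{\pi^\ast}] \leq E_k^{tot} - \epsilon$ for every $k$ (this is the analogue of the randomized policy used in \cite{Georgiadis:Resource06} for Theorem~1). Substituting $\pi^\ast$ into the right-hand side of the drift-plus-penalty inequality, the terms $\sum_k Y_k(t)(E_k^{tot}-\epsilon)$ cancel against $-\sum_k Y_k(t) E_k^{tot}$ up to $-\epsilon\sum_k Y_k(t)$, producing exactly \eqref{eq:boundth2}. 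From there, telescoping this inequality over $t=0,\ldots,T-1$, dividing by $T$, and letting $T\to\infty$ gives \eqref{eq:virbound2} (using that $|S_{tot}(t)| \leq K$ so $VS_{tot}$ absorbs into a $VK$ term) and \eqref{eq:optbound2}, both through the standard argument in \cite{Georgiadis:Resource06}.

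The main obstacle I foresee is not the Lyapunov calculus itself, which is essentially mechanical, but ensuring the existence and the correct slack properties of the randomized comparison policy $\pi^\ast$ under the constraint set of \eqref{eq:problem2}. In particular, one must be careful that the feasibility region for this energy-constrained problem is nonempty and has interior (so that $\epsilon>0$ exists with $\mathbf{E}^{tot} - \boldsymbol{\epsilon}$ still feasible); this requires $T_s(t)=0$ or very small values to be admissible when energy budgets are tight, which is why the paper notes that $T_s(t)$ may fall below $T_{min}(t)$ in \eqref{eq:problem2}. Once this feasibility/interior condition is assumed (as in the hypotheses preceding Theorem~1), the rest of the proof is a direct transcription of the argument in \cite{Georgiadis:Resource06}.
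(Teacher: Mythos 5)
Your proposal is correct and follows essentially the same route the paper intends: the paper omits the proof of this theorem entirely, deferring to the proof of Theorem~1, which is itself the standard drift-plus-penalty argument of \cite{Georgiadis:Resource06} (bound the Lyapunov drift of the virtual queues, note that the algorithm minimizes the action-dependent part of the bound, compare against a stationary randomized policy with slack $\epsilon$, and telescope). Your reconstruction, including the constant $B_2$ and the caveat about feasibility of the energy constraints with interior slack, matches what the paper's sketch presupposes.
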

\begin{proof}
The proof is similar to the proof of Theorem 1, and omitted for brevity. 
\end{proof}
Clearly, the expected $S_{tot}(t)$ can be pushed arbitrarily close to the optimum by choosing
$V$ sufficiently large. However, this leads to increasing bound
on the average virtual queue size and the convergence time of the algorithm  \cite{Karaca:cognitive}. 
%\begin{algorithm}
%At every scheduling time, do:\\
%\textit{(1) Scheduling}:\\
%\begin{enumerate}
%\item  [Step 1:]  Determine $K$ users according to round-robin scheduling policy. Let $\Kc(t)$ be the set of the scheduled users at scheduling time $t$.\\
%
%\item [Step 2:]  The AP sends a TF which indicates the IDs of the scheduled users in $\Kc(t)$, and the assigned sub-channel of the users in $\Kc(t)$\\
%
%\item [Step 3:] Then, each user in $\Kc(t)$ inform the AP its queue buffer by sending BF packet. The AP then estimates the data rate uf each user in $\Kc(t)$ by using the BF packet.\\
%
%
%
%\end{enumerate}
%\textit{(2) Optimal PPDU}:\\
%\begin{enumerate}
%\item [Step 4:] The AP solves the optimization problem (11) to determine the optimal PPDU duration, $T^*$. Then,  the AP announce $T^*$ to the users in $\Kc(t)$. 
%
%\item [Step 5:] Once the other users in the network listen the optimal PPDU then they set their NAV, and become silent until the ongoing transmission ends.
%
%\end{enumerate}
%
%\caption{\small{Optimal PPDU for OFDMA MU-UL}\label{alg:ssf}}
%\end{algorithm}

%Another optimization is to minimize the padding overhead while satisfying the same constraints in (11): 
%\begin{align}
%\min\limits_{T}\ & \sum_{k=1}^K g(T-T_k)\notag\\
%\text{s.t.}\ {\mathbb E}&\left[F_k(t)\right]
%\geq C_k,\label{eq:problem}
%\end{align}
%where $g(.)$ is a decreasing function We next present our solution to the optimization problem defined  in (\ref{eq:problem}). 

\subsection{Overhead and Implementation Issues}
The  D-PPDU and EAD-PPDU algorithms described in
previous sections have some implementation issues and overhead which we explain here.

In practice, the 802.11 devices use virtual carrier sensing mechanism which is called Network Allocation Vector (NAV)  for collision avoidance. The NAV indicates how long the channel will be busy  so that the receiving stations remains silent until the current transmission ends. The NAV is sent within 802.11 MAC layer frame  headers. Since TF is also a MAC frame it should indicate the NAV as well. However, since the channel occupancy time (i.e., $T_s(t)$) can only be determined after receiving  buffer state information  it is not possible for the AP to indicate the duration of the whole transmission including the transmission duration of  BS, OT frames, and the PPDU duration within the TF. In order to solve this problem, we propose that the TF indicates the NAV only until the end  of the transmission of OT frame (see Fig.2). After the NAV expires the other user in the network can try to access to the channel which may collide with the transmission of OT frame.  In order to eliminate this issue, we must guarantee that the OT frame is transmitted first. Note that other users which wait  until the NAV expires have to wait DIFS (Distributed Coordination Function Inter Frame Space) then select a backoff timer. By using this fact, we propose  a similar procedure to the Hybrid coordination function Controlled Channel
Access (HCCA)  introduced in  802.11e \cite{ieeee}, in which OT frame is transmitted with PIFS (Point Coordination Function Inter Frame Space (IFS)) that is the IFS time for beacon packet transmission from the AP, and is less than  DIFS and without any backoff.  Then, the AP will access to the channel first to annoyance the optimal scheduling duration. 

F-PPDU and D-PPDU (also EAD-PPDU) algorithm introduce some overhead in terms of packet exchange and the protocol time, which we quantify here. Let $T_{TF}$, $T_{BS}$ and $T_{OT}$ be the transmission duration of a TF, BS  and OT frames, respectively.   We assume that these packets are transmitted with the basic rate (e.g., 6 Mbps for 802.11ac). Then, the total time required to transmit these frames is equal $T_{TF}+ T_{BS} + T_{OT}$.  In addition,  the implementation of D-PPDU  necessitates to add protocol time between the  frame exchanges. From Fig. 2,  the required protocol time is equal to $2 (xSIFS) + PIFS$. At a given time, the total required time with  D-PPDU is equal to $T_s^*(t) + T_{TF}+ T_{BS} + T_{OT}+  2 (xSIFS) + PIFS$  whereas with F-PPDU it is equal to $T_s + T_{TF}+ (xSIFS)$. As an example, if we only consider the total throughput maximization, for D-PPDU outperforms to F-PPDU at that time,  the following inequality must be satisfied:  $T_s - T_{min}(t) > xSIFS + PIFS + T_{BS} + T_{OT}$ (i.e., $T_s^*(t)=T_{min}(t)$). As an example,  xSIFS=16 $\mu$s and PIFS=25 $\mu$s with 802.11ac. By following the estimation in \cite{IEEEOFDMA}, transmitting a single user information requires roughly 2.6 $\mu$s. Then,  transmitting a TF which contains $y$ number of users information (i.e., station ID and the associated resource block) requires (56 + y $\times$ 2.6)  $\mu$s, where 56 $\mu$s is MAC and PHY  preambles overhead. We assume that BS and OT frames contains only one user information, hence $ T_{BS} = T_{OT}$=(56 + 1 $\times$ 2.6)=58.6$\mu$s.  
\begin{figure}[t!]
    \centering
    \includegraphics[width=0.8\columnwidth]{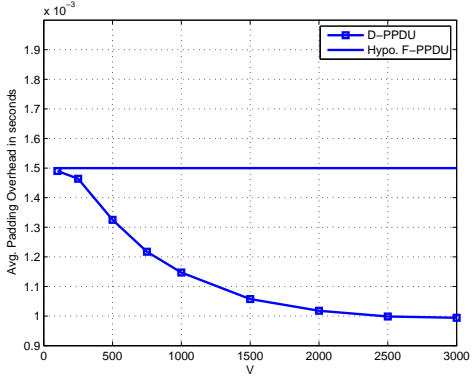}
    \caption{Avg. padding overhead with D-PPDU and Hypothetical F-PPDU.}
    \label{fig:fig3}
\end{figure}
\section{Numerical Results}
We consider a WLAN where there is an AP serving $N=100$ uplink users. We assume that there are $L=20$ identical groups in the network.   Specifically,  in each group there are $K=5$ users, and   the  frame  transmission  duration of each user has  a  Gamma  distribution \cite{Zhu:Gamma} with a different mean and variance. Particularly, the mean of transmission duration (i.e., $T_k(t)$) increases as the user index increases (i,e,. the first user has the  lowest mean whereas the user 5 has the highest mean). A round-robin scheduling policy is employed staring from the first group. In practice,  $T_s$ is usually not continuous, hence    in  the  simulation,  we  assume  that there is a discrete set of $T_s$ values available, which is given as $S_{T_s}=[0.05: 0.05: 12]$ milliseconds.  We set $C_n=0.65$ and  $P=25$ dBm (i.e, approximately 0.31 Watt) for all users. We assume that if a user cannot empty its buffer due to the short scheduling duration, the remaining data in the buffer is accumulated for the next scheduling time.
\begin{figure}[h]
    \centering
    \includegraphics[width=0.8\columnwidth]{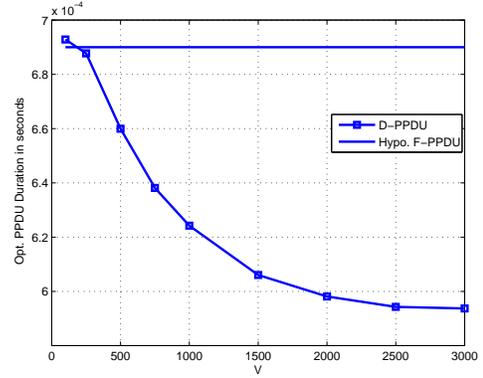}
    \caption{Opt. PPDU Duration with D-PPDU and Hypothetical F-PPDU.}
    \label{fig:fig4}
\end{figure}

First, we evaluate  the performance of F-PPDU algorithm as a solution the optimization problem \eqref{eq:problem}. Since F-PPDU uses a fixed $T_s$ first we find the optimal constant $T_s$. To do that we run a number of simulations for a given scenario with various values of $T_s$. Then, we pick the  $T_s$ value which has the minimum value and satisfy all the constraints in \eqref{eq:problem}.  We call this optimal value as Hypothetical value since it is impossible to find the optimal $T_s$ in advance.  Importantly,  it also means that  any other algorithms with a fixed and constant $T_s$ will have worse performance than the Hypo. F-PPDU in which $T_s$ is determined via an oracle (e.g.,  Hypo. F-PPDU has the highest performance). We observe that the total padding overhead sum over all the scheduled users and the optimal PPDU duration with Hypo. F-PPDU are found as $1.5$ and $0.69$ ms. 
%\begin{figure}[h]
%    \centering
%    \includegraphics[width=0.82\columnwidth]{optTs.eps}
%    \caption{Opt. PPDU Duration with D-PPDU and Hypothetical F-PPDU.}
%    \label{fig:fig2}
%\end{figure}
%\begin{figure}[t]
%    \centering
%    \includegraphics[width=0.8\columnwidth]{}
%    \caption{Avg. user constraints with F-PPDU vs. V.}
%    \label{fig:fig5}
%\end{figure}

Next, we evaluate the performance of D-PPDU with increasing values of $V$ since, according to Theorem 1, as $V$ increases the padding overhead should decrease. Figure~\ref{fig:fig3} depicts the total average padding overhead caused by Hypo. F-PPDU and D-PPDU algorithms. In our first simulation, we find that it is equal to 1.5 ms with Hypo. F-PPDU. As $V$ increases from 100 to 3000, it can be seen that the total average padding overhead decreases, and converges to around 1 ms, which means D-PPDU outperforms  Hypo- F-PPDU. The reason is that even though Hypo. F-PPDU satisfies all the constraints, and it is not opportunistic in the sense that it is not capable of exploiting the opportunities to further reduce the PPDU duration. Specifically, depending on the current situation of the constraints (i.e., the size of the virtual queues $X_n(t)$) D-PPDU may prefer not to allocate long PPDU duration even if the constraints may not be satisfied at that time instant. Figure~\ref{fig:fig4} shows the optimal PPDU duration $T_s^*$ with Hypo. F-PPDU and D-PPDU. It is equal to 0.69 ms in Hypo. F-PPDU. As $V$ increases $T_s^*$ decreases, and reaches to its optimal value which is approximately equal to 0.59 ms. From the energy point of view,  with fixed transmit power, it can be concluded that D-PPDU algorithm can achieve  $16\%$ energy gain compared to Hypo. F-PPDU algorithm. Since Hypo. F-PPDU cannot be achieved in practice, the gain will be higher compared to any other algorithm with a fixed $T_s$.
%\begin{figure}[t]
%    \centering
%    \includegraphics[width=0.82\columnwidth]{optTs.eps}
%    \caption{Opt. PPDU Duration with D-PPDU and Hypothetical F-PPDU.}
%    \label{fig:fig2}
%\end{figure}
%\begin{figure}[h]
%    \centering
%    \includegraphics[width=0.82\columnwidth]{Const.eps}
%    \caption{Avg. user constraints with F-PPDU vs. V.}
%    \label{fig:fig3}
%\end{figure}

Figure~\ref{fig:fig5} depicts the average $F_k(t)$ values of the users in an arbitrary group. We recall that since all groups are identical it holds for other groups as well. When $V$ (e.g., $V \leq 500$) is small, the resulting $T_s$ is  high and all the constraints associated with these five users are satisfied at the point  higher than the minimum requirements (i.e., ${\mathbb E} [F_k(t)] > C_k$). However, with high values of $V$ the constraint of the  user 5 is satisfied as equality (i.e., ${\mathbb E}\left[F_5(t)\right] = 0.65$), and other users achieves a performance level higher than their minimum requirements. This is also intuitive in the sense that the optimal algorithm with the fairness consideration  aims to minimize the overhead of the user with the maximum PPDU duration. 
\begin{figure}[t]
    \centering
    \includegraphics[width=0.8\columnwidth]{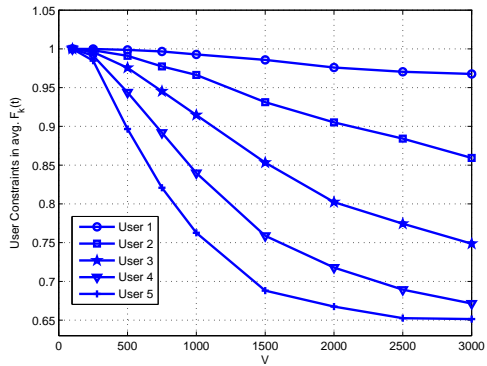}
    \caption{Avg. user constraints with F-PPDU vs. V.}
    \label{fig:fig5}
\end{figure}
\begin{figure}[t]
    \centering
    \includegraphics[width=0.8\columnwidth]{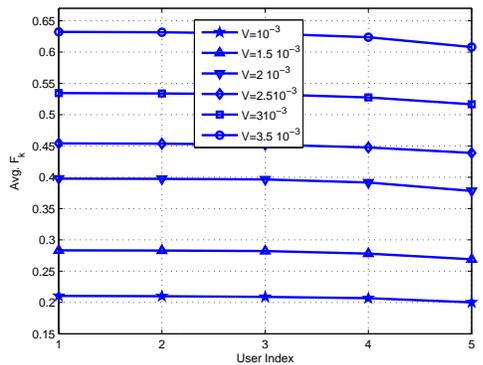}
    \caption{Avg. $F_k$ vs. V for various user index}
    \label{fig:fig6}
\end{figure}

Lastly, we evaluate the performance of EAD-PPDU algorithm. We use a different set of $V$ values  to facilitate the comparability of the two terms in \eqref{eq:policy2} since the unit of energy is on the order of $10^{-3}$, and the size of the virtual queues does not grow fast enough, which affects the convergence time significantly. We observe that the increase in the objective function in problem \eqref{eq:problem2}  with higher values of $V$ is small, however,  the convergence time is more than ten times higher compared to the current set of $V$ values.  Specifically, the number of iterations used with this current $V$ values is $10^5$, however, as $V$ increases (e.g. $V=100$) it requires more than $10^6$ iterations. Figure~\ref{fig:fig6} depicts the average number of times the users empty their buffers, and their energy constraints with respect to increasing values of $V$. The average number of time the  user 1 can empty its buffer is highest whereas the user 5 has the lowest one since  user 1 has the lowest transmission duration on average. Also, as $V$ increases,  all the users have more opportunity to empty their buffers as we verify in Theorem 2. We also observe that all constraints in
\eqref{eq:problem2} are satisfied. 
%%%%%%%%%%%%%%%%%%%%%%%%%%%%%%%%%%%%%%%%%%%%%%%%%%%%%%%%%%%%%%%%%%%%%%%%%%%%%%%%%%%%%%%%%%%%%%%%%%%%%%%%%%%%%%%%%%%%%%%%%%%%%%%%%
%%%%%%%%%%%%%%%%%%%%%%%%%%%%%%%%%%%%%%%%%%%%%%%%%%%%%%%%%%%%%%%%%%%%%%%%%%%%%%%%%%%%%%%%%%%%%%%%%%%%%%%%%%%%%%%%%%%%%%%%%%%%%%%%%

\section{Conclusion}
In this paper, we have developed  two resource allocation policies for the utilization of OFDMA for 802.11 networks, and have investigated their efficacy. We first highlighted the facts that fixed scheduling duration is necessary, and consequently the padding bits are unavoidable. Then,   by taking into account the padding overhead, energy and fairness issues, and using Lyapunov optimization technique  we have developed optimal algorithms 
by exploiting the time-varying transmission duration of users and realizing the benefits of transmission with dynamic scheduling duration. Numerical results show that our first policy minimizes the padding overhead in
addition to providing airtime fairness  required by the users. Our second policy can maximizes user's satisfaction in terms of airtime depending on their energy budget. Possible
extension of this paper includes investigating randomized algorithms with dynamic scheduling duration,  and reducing protocol overhead. 

%%%%%%%%%%%%%%%%%%%%%%%%%%%%%%%%%%%%%%%%%%%%%%%%%%%%%%%%%%%%%%%%%%%%%%%%%%%%%%%%%%%%%%%%%%%%%%%%%%%%%%%%%%%%%%%%%%%%%%%%%%%%%%%%%
\bibliographystyle{IEEEtran}
\bibliography{IEEEabrv,Wiopt}

\end{document}